\documentclass[11pt]{article}
\usepackage{fullpage}
\usepackage[margin=2.5cm]{geometry}

\usepackage[utf8]{inputenc}

\usepackage{amsmath, amssymb, amsthm}
\usepackage{enumitem}

\usepackage{graphicx}
\usepackage{color}
\usepackage[dvipsnames]{xcolor}

\usepackage{natbib}
\usepackage{nicefrac}

\usepackage{amsmath}
\usepackage{mdframed}

\usepackage[T1]{fontenc}

\newtheorem{theorem}{Theorem}[section]
\newtheorem{claim}[theorem]{Claim}

\newtheorem{lemma}[theorem]{Lemma}

\newcommand{\E}{\mathbf{E}}



\newcommand{\calF}{\mathcal{F}}

\DeclareMathOperator*{\argmin}{argmin}
\title{Single-Pass Pivot Algorithm for Correlation Clustering. \\
Keep it simple!}
\author{Sayak Chakrabarty \and Konstantin Makarychev}
\date{Northwestern University}
\begin{document}
\maketitle
\begin{abstract}
   We show that a simple single-pass semi-streaming variant of the Pivot algorithm for Correlation Clustering gives a $(3+\varepsilon)$-approximation using $O(n/\varepsilon)$ words of memory. This is a slight improvement over the recent results of Cambus, Kuhn, Lindy, Pai, and Uitto, who gave a $(3+\varepsilon)$-approximation using $O(n \log n)$ words of memory, and Behnezhad, Charikar, Ma, and Tan, who gave a 5-approximation using $O(n)$ words of memory. One of the main contributions of this paper is that both the algorithm and its analysis are very simple, and also the algorithm is easy to implement.
\end{abstract}
\section{Introduction}
In this paper, we provide a simple single-pass streaming algorithm for Correlation Clustering. The Correlation Clustering model was  introduced by \cite*{bansal2004correlation}. In this model, we have a set of objects represented by vertices of a graph. A noisy binary classifier labels all edges of the graph with ``+'' and ``-'' sings. We call edges labeled with a ``+'' sign -- positive edges and edges labeled with a ``-'' sign -- negative edges. A positive edge indicates that the endpoints of the edge are similar, and a negative edge indicates that the endpoints are dissimilar. Given a labeled graph, our goal is to find a  clustering that disagrees with the minimum number of edge labels. A positive edge disagrees with the clustering if its endpoints are in different clusters, and a negative edge disagrees with the clustering if its endpoints are in the same cluster. We assume that the classifier can make many errors and, consequently, there may be no clustering that agrees with  all labels of the graph.

Correlation Clustering has been widely studied in ML and theoretical computer science communities. It was 
used for image-segmentation (\cite{kim2014image}), link prediction (\cite{yaroslavtsev2018massively}), document clustering (\cite{bansal2004correlation}), community detection (\cite{veldt2018correlation,shiscalable}), and other problems. 

In Section \ref{sec:related-work}, 
we discuss various objectives and assumptions used for Correlation Clustering. We now focus on the most common setting, where the graph $G$ is a complete graph (i.e., we have a ``+'' or ``-'' label for each pair of vertices $u$ and $v$), the classifier can make  arbitrary mistakes, and the objective is to minimize the number of disagreements.

In their original paper, \cite*{bansal2004correlation} gave a constant factor approximation for Correlation Clustering on complete graphs. \cite*{charikar2005clustering} improved the approximation factor to 4. Then, \cite*{ailon2008aggregating} proposed combinatorial and LP-based \textsc{Pivot} algorithms, with approximation factors $3$ and $2.5$, respectively. The LP-based \textsc{Pivot} algorithm was further improved by 
\cite*{chawla2015near}, who provided a $2.06$ approximation. Finally, \cite*{cohen2022correlation} used a Sherali-Adams relaxation for the problem to get a $1.994 + \varepsilon$ approximation factor. On the hardness side, 
\cite*{charikar2005clustering} proved that the problem is APX-Hard and thus unlikely to admit a PTAS. 

Most abovementioned algorithms use linear programming (LP) for finding a clustering. The number of constraints in the standard LP is $\Theta(n^3)$ (due to ``triangle inequality constraints''). Solving such LPs is prohibitively expensive for large  data sets\footnote{\cite{veldt2022correlation} recently proposed a practical linear programming algorithm for Coreallation Clustering.}. That is why, in practice,  the algorithm of choice is the combinatorial  \textsc{Pivot} algorithm by
\cite{ailon2008aggregating} along with its variants. The approximation factor of this algorithm is 3.

To deal with massive data sets, researchers started exploring parallel (MPC) and semi-streaming algorithms. Recent works on MPC algorithms include 
\cite*{blelloch2012greedy,chierichetti2014correlation,pan2015parallel,ahn2015correlation,
ghaffari2018improved,fischer2019tight,cambus2021massively,cohen2021correlation,assadi2022sublinear,cambus2022parallel,behnezhad2022almost}.

Streaming algorithms have been extensively studied by 
\cite*{chierichetti2014correlation,ahn2015correlation,
ghaffari2018improved,cohen2021correlation,assadi2022sublinear,cambus2022parallel,behnezhad2022almost,behnezhad2023single}. This year, \cite*{behnezhad2023single} designed a single-pass polynomial-time semi-streaming $5$-approximation algorithm that uses $O(n)$ words of memory. Then, \cite*{cambus2022parallel} provided a single-pass polynomial-time semi-streaming ($3+\varepsilon$)-approximation algorithm that uses $O(n\log n)$ words of memory. Both algorithms are based on the combinatorial \textsc{Pivot} algorithm. 

\paragraph{Our Results.} In this paper, we examine a simple semi-streaming variant of the \cite*{ailon2008aggregating} combinatorial \textsc{Pivot} algorithm that uses $O(n/\varepsilon)$ words of memory and gives a $3+\varepsilon$ approximation. Our algorithm needs less memory than the algorithm by \cite*{cambus2022parallel} and gives the same $(3+\varepsilon)$ approximation guarantee. The algorithm is very simple and easy to implement. Its analysis is simple and concise.

\paragraph{Semi-streaming Model.} We consider the following single-pass semi-streaming model formally defined by~\cite*{feigenbaum2005graph} (see also~\cite*{muthukrishnan2005data}). The algorithm gets edges of the graph along with their labels from an input stream. The order of edges in the graph is arbitrary and can be adversarial. The algorithm can read the stream only once. We assume that the algorithm has $O(n)$ words of memory, where $k$ is a constant. Thus, it cannot store all edges in the main memory. As in many previous papers on streaming algorithms for Correlation Clustering, we shall assume that the stream contains only positive edges (if negative edges are present in the stream, our algorithm will simply ignore them). We prove the following theorem.

\begin{theorem}
Algorithm 1 (see Figure~\ref{fig:main-streaming-alg}) is a randomized polynomial-time semi-streaming algorithm with an approximation factor $3+O(1/k)$. The algorithm uses $O(kn)$ words of memory, where each word can store numbers between $1$ and $n$. The algorithm spends $O(\log k)$ units of time for every edge it reads in the streaming phase. Then, it requires additional $O(kn)$ units of time to find a clustering (in \emph{the pivot selection and clustering} phase).
\end{theorem}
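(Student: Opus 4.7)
The plan is to couple Algorithm~1 with the classical \textsc{Pivot} algorithm of \cite{ailon2008aggregating} executed on the \emph{same} random vertex priorities, and to adapt the bad-triangle analysis of that paper. First I would settle the resource bounds, which follow directly from the data structures. Assuming the algorithm stores, for each vertex $v$, the (up to) $k$ positive neighbors of $v$ of smallest priority among those below $r(v)$, kept in a max-heap keyed by priority, each streaming update takes $O(\log k)$ time, the total storage is $O(kn)$ words (each entry is a neighbor index in $\{1,\dots,n\}$), and the pivot-selection/clustering phase scans vertices in priority order while consulting a list of size at most $k$ per vertex, for a total of $O(kn)$ time.

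The core of the argument is the $(3+O(1/k))$-approximation guarantee. Using the coupling, I would split the expected cost of Algorithm~1 into the expected cost of the coupled \textsc{Pivot} run (already bounded by $3\,\mathrm{OPT}$) plus an additional expected cost coming from vertices whose assignment under Algorithm~1 disagrees with their assignment under \textsc{Pivot}. Call a vertex $v$ \emph{unfaithful} if these two assignments differ; this can happen only when the stored list $L(v)$ is full and the true \textsc{Pivot}-pivot $p_v$ of $v$ is not present in $L(v)$, which in turn forces at least $k$ positive neighbors of $v$ to have priorities in the interval $(r(p_v),\, r(v))$ and all of them to be non-pivots (otherwise $p_v$ would not be \textsc{Pivot}'s choice for $v$).

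To bound this extra cost, I would re-run the triangle charging of \cite{ailon2008aggregating}, now over triples $\{v, p_v, w\}$ where $w$ is a positive non-pivot neighbor of $v$ with priority less than $r(p_v)$. Conditioned on the local structure of the triple and on the ranking of $v$, $p_v$, and $w$, the uniformly random priorities make the event ``$w$ is ranked among the $k$ positive neighbors of $v$ of smallest priority below $r(v)$, and $p_v$ is not'' a standard order-statistic event of probability $O(1/k)$. Summing over triples and invoking the same LP lower bound used in \cite{ailon2008aggregating} (that $\mathrm{OPT}$ is at least the fractional packing of bad triangles), the total expected excess cost of Algorithm~1 over \textsc{Pivot} is $O(\mathrm{OPT}/k)$, which combined with the $3\,\mathrm{OPT}$ bound yields the approximation guarantee.

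The main obstacle is making the overflow event interact cleanly with the triangle charging, so that the $1/k$ factor \emph{multiplies}, rather than merely adds to, a quantity already known to be at most $\mathrm{OPT}$. I would handle this by conditioning first on the relative order of priorities inside each triangle, where the standard \textsc{Pivot} analysis already applies in expectation, and only then using the residual randomness over the priorities of the remaining positive neighbors of $v$ to derive the $O(1/k)$ bound on overflow. A secondary technicality is to ensure that each extra disagreement edge is charged to a distinct triple, to avoid double-counting across different unfaithful vertices; this can be arranged by always charging the pair $(v, p_v)$ together with the specific blocking witness $w$ that is ranked just above $p_v$ in $v$'s positive neighborhood, giving a canonical triple per extra disagreement.
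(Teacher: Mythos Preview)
Your resource-bound discussion is fine and matches the paper. The approximation argument, however, has a real gap.

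Your coupling hinges on the claim that a vertex $v$ can be ``unfaithful'' (i.e., clustered differently than in classical \textsc{Pivot} on the same $\pi$) only when its list $L(v)$ is full and the classical pivot $p_v$ has been pushed out. This is false, because the set of pivots in Algorithm~1 need not coincide with the set of pivots in classical \textsc{Pivot}. A vertex $u$ that would be a pivot under classical \textsc{Pivot} can become a \emph{singleton} in Algorithm~1: it suffices that $k$ of $u$'s higher-ranked positive neighbors are themselves non-pivots (already absorbed) before $u$ is reached, so that $u\notin A(u)$ and no pivot appears in $A(u)$. Once $u$ loses its pivot status, every vertex $z$ with $p_z=u$ is unfaithful even though $p_z\in L(z)$; and these $z$'s may in turn become pivots or singletons, and so on. Your triple-charging scheme never sees such $z$'s, and the cascade can be arbitrarily deep. (Concretely, with $k=2$ and ranks $a<b<c<u<z$, positive edges $(a,b),(a,c),(b,u),(c,u),(u,z)$: classical \textsc{Pivot} makes $u$ a pivot and puts $z$ with $u$, but Algorithm~1 makes $u$ a singleton and $z$ a pivot, even though $u\in A(z)$.)

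A second, independent issue: even for a ``directly'' unfaithful $v$, the triple $\{v,p_v,w\}$ you propose is not guaranteed to be a bad triangle---the edge $(p_v,w)$ may well be positive---so the fractional bad-triangle packing lower bound on $\mathrm{OPT}$ does not immediately apply. You would need a different charging target.

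The paper sidesteps both problems by \emph{not} comparing to a full classical \textsc{Pivot} run. It rewrites Algorithm~1 as a process that interleaves ``pivot steps'' (when the currently processed vertex is still unclustered) with ``singleton steps.'' Conditioned on being at a pivot step, the chosen vertex is uniform over the currently unclustered set, so the Ailon--Charikar--Newman per-step bound applies and the total cost settled at pivot steps is at most $3\,\mathrm{OPT}$ in expectation. The remaining cost, $\sum_v S(v)$ from singleton creation, is controlled by a submartingale/potential argument showing $\E[S(v)]\le \tfrac{1}{k}\,\E[X_{n+1}(v)]$, where $X_{n+1}(v)$ counts positive edges of $v$ already cut at pivot steps; summing and relating $\sum_v X_{n+1}(v)$ back to the pivot-step cost yields $\E[\sum_v S(v)]\le \tfrac{6}{k-1}\,\mathrm{OPT}$. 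The key idea you are missing is this decomposition together with the potential $\Phi_t(v)=\mathbf{1}(v\notin U_t)\,X_t(v)-K_t(v)\,(D^+(v)-X_t(v))$, whose submartingale property is what converts ``$k$ non-pivot neighbors seen'' into the $1/k$ factor without any triangle charging or coupling to classical \textsc{Pivot}.
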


\subsection{Related Work}\label{sec:related-work}

In this paper, we consider the most commonly studied objective, \textsc{MinDisagree}, of minimizing the number of disagreements. \cite*{swamy2004correlation} and \cite*{charikar2005clustering} 
investigated a complementary objective, \textsc{MaxAgree}, of maximizing the number of agreements. They gave $0.766$-approximation algorithms for \textsc{MaxAgree}. \cite*{puleo2016correlation,chierichetti2017fair,charikar2017local,ahmadi2019min,kalhan2019correlation,ahmadian2020fair,friggstad2021fair,jafarov2021local, schwartz2022fair, ahmadian2023improved,davies2023fast} considered different variants of fair and local objectives.

\cite*{charikar2005clustering} and \cite*{demaine2006correlation} provided $O(\log n)$ approximation for Correlation Clustering on incomplete graphs and edge-weighed graphs. For the case when all edge weights are in the range $[\alpha,1]$, \cite*{jafarov2020correlation} gave a $(3+2\ln\nicefrac{1}{\alpha})$-approximation algorithm.

The problem has also been studied under the assumption that the classifier makes random or semi-random errors by \cite*{bansal2004correlation,elsner2009bounding,mathieu2010correlation,makarychev2015correlation}.
Online variants of Correlation Clustering were considered by \cite*{mathieu2010online,lattanzi2021robust,cohen2022online}.

\section{Single-Pass Streaming Algorithm}
Our algorithm is an extension of the algorithm by
\cite*{behnezhad2023single}. Loosely speaking, the algorithm works as follows: It picks a random ranking of vertices, scans the input stream and keeps only the $k$ top-ranked neighbours for every vertex $u$. Then, it runs the~\cite*{ailon2008aggregating} \textsc{Pivot} algorithm on the graph that contains only edges from every vertex to its $k$ top-ranked neighbours. We provide the details below.

We shall assume that every vertex is connected with itself by a positive edge. In other words, we include each vertex in the set of its own (positive) neighbours. We denote the set of positive neighbours of $u$ by $N(u)$.
The algorithm first picks a random ordering of vertices $\pi:V\to \{1,\dots,n\}$. We say that vertex $u$ is ranked higher than vertex $v$ according to $\pi$ if $\pi(u)<\pi(v)$. Therefore, $\pi^{-1}(1)$ is the highest ranked vertex, and $\pi^{-1}(n)$ is the lowest ranked vertex.  For every vertex $u$, we will keep track of its $k$ highest-ranked neighbours. We denote this set by $A(u)$. We initialize each $A(u) = \{u\}$. Then, for every positive edge $(u,v)$ in the input stream, we add $u$ to the set of $v$'s neighbours $A(v)$ and $v$ to the set of $u$'s neighbours $A(u)$. If  $A(u)$ or $A(v)$ get larger than $k$, we remove the lowest ranked vertices from these sets to make sure that the sizes of $A(u)$ and $A(v)$ are at most $k$. After we finish reading all edges from the input stream, we run the following variant of the \textsc{Pivot} algorithm. We mark all vertices as non-pivots.
Then, we consider vertices according to their ranking (i.e.,
$\pi^{-1}(1),\pi^{-1}(2),
\dots,\pi^{-1}(n)$). For every vertex $u$,  we find the highest ranked neighbour $v$ in $A(u)$, such that $v = u$ or $v$ is a \emph{pivot}. If such $v$ exists, we place $u$ in the cluster of $v$. Moreover, if $v=u$, we declare $u$ a \emph{pivot}. If no such $v$ in $A(u)$ exists, we put $u$ in its own cluster and declare it a \emph{singleton}. We provide pseudo-code for this algorithm in Figure~\ref{fig:main-streaming-alg}.

\begin{figure}
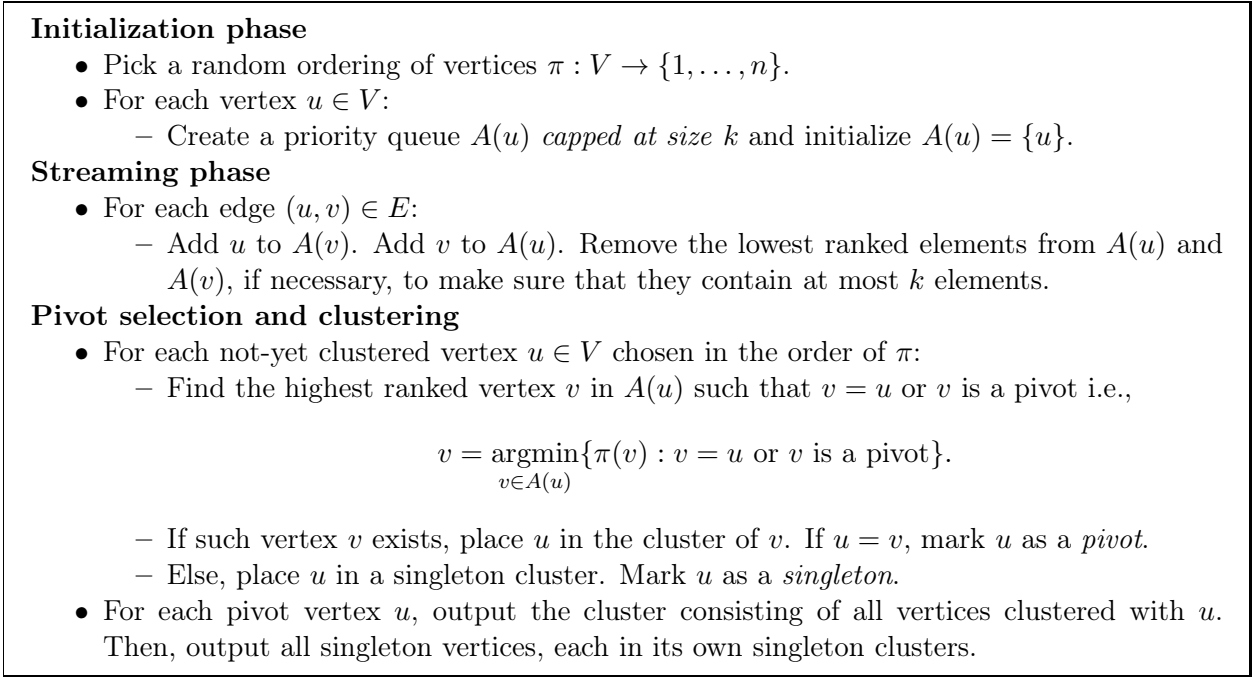

\begin{mdframed}[linewidth=0.5pt]
\textbf{Initialization phase}
\begin{itemize}[noitemsep,nolistsep]
\item Pick a random ordering of vertices $\pi: V\to\{1,\dots,n\}$.
\item For each vertex $u \in V$:
\begin{itemize}[noitemsep,nolistsep]
    \item Create a priority queue $A(u)$ \emph{capped at size} $k$ and initialize $A(u) = \{u\}$.
\end{itemize}
\end{itemize}      
\textbf{Streaming phase}
\begin{itemize}[noitemsep,nolistsep]
\item For each edge $(u, v) \in E$:
\begin{itemize}[noitemsep,nolistsep]
    \item Add $u$ to $A(v)$. Add $v$ to $A(u)$. Remove the lowest ranked elements from $A(u)$ and $A(v)$, if necessary, to make sure that they contain at most $k$ elements.
\end{itemize}
\end{itemize}
\textbf{Pivot selection and clustering}
\begin{itemize}[noitemsep,nolistsep]
\item For each not-yet clustered vertex $u \in V$ chosen in the order of $\pi$:  
\begin{itemize}[noitemsep,nolistsep]
\item Find the highest ranked vertex $v$ in $A(u)$  such that $v = u$ or $v$ is a pivot i.e.,
$$
v =
\argmin_{v\in A(u)}\{\pi(v): v = u \text{ or } v \text{ is a pivot}\}.
$$
\item If such vertex $v$ exists, place $u$ in the cluster of $v$. If $u = v$,
mark $u$ as a \emph{pivot}.
\item Else, place $u$ in a singleton cluster. Mark $u$ as a \emph{singleton}.
\end{itemize}
\item For each pivot vertex $u$, output the cluster consisting of all vertices clustered with $u$. Then, output all singleton vertices, each in its own singleton clusters.
\end{itemize}
\end{mdframed}
\caption{Semi-Streaming Algorithm for Correlation Clustering}
\label{fig:main-streaming-alg}
\end{figure}

We use fixed-size priority queues to store sets $A(u)$. Whenever we need to add a vertex $v$ to $A(u)$, we insert $v$ into $A(u)$ and remove the lowest-ranked element from $A(u)$ if the size of $A(u)$ is greater than~$k$. We remark that even though each $u$ is always added to set $A(u)$ in the beginning of the algorithm, it may later be removed if $u$ is not among its $k$ top-neighbours. 

We can reduce the memory usage of the algorithm by adding $u$ to $A(v)$ only if $\pi(u)< \pi(v)$ and adding $v$ to $A(u)$ only if $\pi(v)< \pi(u)$. This change does not affect the resulting clustering because $u$ is always added to $A(u)$, and, consequently, we never pick a pivot $v$ with $\pi(v)>\pi(u)$ at the  clustering phase of the algorithm. Nevertheless, to simplify the exposition, we will assume that we always add $u$ to $A(v)$ and $v$ to $A(u)$.

In this paper, we will call clusters singleton clusters only if they were created in the ``else'' clause of the clustering phase. That is, if they contain a single vertex $u$, which is not a pivot. Note that some other clusters may contain a single vertex as well. We will not call them singleton clusters.

Before analyzing the approximation factor of the algorithm, we briefly discuss its memory usage and running time.
For every vertex $u$, we keep its rank $\pi(u)$, a set of its $k$ highest ranked  neighbors  $A(u)$, a pointer to the pivot if $u$ is not a singleton, and two bits indicating whether $u$ is a pivot and $u$ is a singleton cluster. Thus, the total memory usage is $O(kn)$ words; each word is an integer that can store a number between $1$ and $n$.

\enlargethispage*{0.3cm}

The running time of the initialization step is $O(n)$. The running time of the streaming phase is $O(m\log k)$ where $m$ is the number of positive edges (the algorithm needs to perform a constant number of priority queue operations for each edge in the stream; the cost of each operation is $O(\log k)$). The running time of the clustering phase is $O(kn)$, since we need to examine all neighbors $v$ of $u$ in queue $A(u)$. We think of $k=O(\nicefrac{1}{\varepsilon})$ as being a constant.

\subsection{Approximation Factor}
We now show that the approximation factor of the algorithm is $3+O(1/k)$. For the sake of analysis, we consider an equivalent variant of our algorithm that reveals the ordering $\pi$ one step at a time. This algorithm is not a streaming algorithm. However, it produces exactly the same clustering as our single-pass streaming algorithm. Thus, it is sufficient to show  that it has a $3+O(1/k)$ approximation. The algorithm, which we will refer to as \emph{Algorithm 2}, works as follows. Initially, it marks all vertices as unprocessed and not clustered. It also sets a counter $K_1(u)=0$ for each $u$. We denote the set of all unprocessed vertices at step $t$ by $V_t$ and the set of all non-clustered vertices by $U_t$. Then, $V_1=V$ and $U_1 = V$. 
At step $t$, the algorithm picks a random vertex $w_t$ in $V_t$. If $w_t$ is not yet clustered ($w_t\notin U_t$), then $w_t$ is marked as a pivot, a new cluster is created for $w_t$, and all not yet clustered neighbours of $w_t$ including $w_t$ itself are added to the new cluster. If $w_t$ is already clustered, then the counter $K_t(v)$ is incremented by $1$ for all not yet clustered neighbours $v$ of $w_t$ (i.e., $K_{t+1}(v)=K_{t}(v) +1$). After that, all vertices $v$, whose counter $K_{t+1}(v)$ got equal to $k$ at this step, are put into singleton clusters. We mark $w_t$ as processed (by letting $V_{t+1}=V_t\setminus \{w_t\}$) and mark all vertices clustered at this step as clustered (by removing them from $U_t$).

We show that this algorithm outputs the same clustering as Algorithm 1. To this end, define an ordering $\pi$ for Algorithm 2 as $\pi: w_t \mapsto t$, where $w_t$ is the vertex considered at step $t$ of Algorithm 2. Clearly, $\pi$ is a random (uniformly distributed) permutation. We prove the following lemma.

\begin{lemma}
If  Algorithm 1
and Algorithm 2 use the same ordering $\pi$, then they produce exactly the same clustering of $V$.
\end{lemma}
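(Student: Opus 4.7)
The plan is to fix the ordering $\pi$ and induct on the rank $\pi(u)$ of vertices $u \in V$, showing that both algorithms assign $u$ exactly the same role — pivot, member of some specific pivot's cluster, or singleton. Since each algorithm outputs exactly the family of pivots' clusters together with the singleton clusters, agreement on roles immediately implies the two clusterings coincide.

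Fix a vertex $u$ and list the positive neighbors $v_1,\dots,v_d$ of $u$ (including $u$ itself by convention) in increasing order of $\pi$-rank, so that $v_{j^*}=u$ for some index $j^*$. After the streaming phase of Algorithm 1 we have $A(u) = \{v_1,\dots,v_{\min(k,d)}\}$, and thus at step $\pi(u)$ of its clustering phase Algorithm 1 applies the following rule, which (by induction) depends only on the already-decided roles of $v_1,\dots,v_{j^*-1}$: if some $v_j$ with $j<j^*$ and $j\le k$ is a pivot, then $u$ joins the cluster of the smallest-index such $v_{j_1}$; otherwise, if $j^*\le k$ then $u$ becomes a pivot, and if $j^*>k$ then $u$ is a singleton.

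Next, the plan is to show that Algorithm 2 assigns $u$ the same role by tracing what happens to $u$ as we process $v_1,v_2,\dots$ in rank order. The key invariant is that, while $u$ remains unclustered, $K(u)$ equals the number of already-processed neighbors of $u$ that are non-pivots. From this one reads off three cases: (i) if some $v_{j_1}$ with $j_1<j^*$ and $j_1\le k$ is a pivot (smallest such), then just before step $\pi(v_{j_1})$ we have $K(u)=j_1-1<k$ and $u$ is still unclustered, so $v_{j_1}$ claims $u$; (ii) if no such pivot exists and $j^*\le k$, then at step $\pi(u)$ we have $K(u)\le j^*-1<k$ and $u$ is unclustered, so $u$ becomes a pivot; (iii) if no such pivot exists and $j^*>k$, then $K(u)$ reaches $k$ at step $\pi(v_k)$ while $u$ is still unclustered, so $u$ becomes a singleton. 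In all three cases the outcome matches Algorithm 1's rule above.

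The main subtlety — and the one place where the choice of threshold $k$ is essential — is the variant of case (iii) in which $u$ does have a pivot neighbor $v_{j_1}$, but with index $j_1>k$. Algorithm 1 correctly marks $u$ as a singleton because $v_{j_1}\notin A(u)=\{v_1,\dots,v_k\}$; Algorithm 2 must match this, and it does because $K(u)$ reaches $k$ at step $\pi(v_k)<\pi(v_{j_1})$, so $u$ is turned into a singleton strictly before $v_{j_1}$ is processed, and the eventual appearance of the pivot $v_{j_1}$ has no effect on $u$. Verifying this alignment between the counter threshold and the size cap on $A(u)$ is the principal obstacle; once it is checked, all remaining cases are immediate and the induction closes, completing the proof that the two algorithms produce identical clusterings.
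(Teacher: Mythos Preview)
Your proposal is correct and follows essentially the same approach as the paper: both argue by induction on the rank $\pi(u)$ that each algorithm scans the top-$k$ neighbours of $u$ in the order $\pi$ and assigns $u$ to the first pivot encountered (or declares $u$ a singleton if none is found among those $k$). Your write-up makes the case analysis and the alignment between the counter threshold in Algorithm~2 and the size cap on $A(u)$ more explicit than the paper's terse version, but the underlying argument is the same.
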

\begin{proof}
We show that every vertex $u$ is assigned to the same pivot $v$ in both algorithms, or $u$ is put in a singleton cluster by both algorithms assuming that Algorithms 1 and 2 use the same ordering  $\pi$. The proof is by induction on the rank $\pi(u)$. Consider a vertex $u$ and assume that all vertices $v$ higher ranked than $u$ are clustered in the same way by Algorithms~1 and~2. Both algorithms examine neighbours of $u$ in exactly the same order, $\pi$. Moreover, they only 
consider the top $k$ neighbours: Algorithm 1 does so because it stores  neighbours of $u$ in a  priority queue of size $k$; Algorithm 2 does so because it keeps track  
of the number, $K_t(u)$, of examined neighbours of $u$ and once that number equals $k$, it puts $u$ in a singleton cluster. Both algorithms assign $u$ to the first found pivot vertex $v$ or to a singleton cluster if no such $v$ exists among the top $k$ neighbours of $u$. Hence, both algorithms cluster $u$ in the same way.
\end{proof}

We now analyze Algorithm 2. We split all steps of Algorithm 2 into \emph{pivot} and \emph{singleton} steps. We say that step $t$ is a pivot step if vertex $w_t$ which was processed at this step was not clustered at the beginning of the step (i.e.,
$w_t\in U_t$). We say that $t$ is a singleton step, otherwise (i.e.,
$w_t\in V_t\setminus U_t$). Observe that all clusters created at pivot steps are non-singleton clusters (even though some of them may contain a single vertex; see above for the definition of singleton clusters). In contrast, all clusters created at singleton steps are singleton clusters. We will now show that the expected cost of clusters created at pivot steps is at most $3OPT$. Then, we will analyze singleton steps and prove that the cost of clusters created at those steps is at most $O(1/k)\,OPT$, in expectation (where $OPT$ is the cost of the optimal solution). 

We say that a positive or negative edge $(u,v)$ is settled at step $t$ if both $u$ and $v$ were not clustered at the beginning of step $t$ but $u$, $v$, or both $u$ and $v$ were clustered at step $t$. Note that once an edge $(u,v)$ is settled, we can determine if $(u,v)$ is in agreement or disagreement with the clustering produced by the algorithm:
For a positive edge $(u,v)\in E^+$ if both $u$ and $v$ belong to the newly formed cluster, then $(u,v)$ is in agreement with the clustering. For a negative edge $(u,v)\in E^-$ if both $u$ and $v$ belong to the newly formed cluster, then $(u,v)$ is in disagreement with the clustering. Similarly, if only one of the endpoints $u$ or $v$ belongs to the new cluster, then $(u,v)$ is in agreement with the clustering for a negative edge $(u,v)\in E^-$ and in disagreement with the clustering for a positive edge $(u,v)\in E^+$. 

We say that the cost of a settled edge is $0$ if it agrees with the clustering and $1$ if it disagrees with the clustering. Let $P$ be the cost of all edges settled during pivot steps. Furthermore, let $P^+$ be the cost of all positive edges and $P^-$ be the cost of all negative edges settled during pivot steps. Then, $P= P^+ + P^-$. If $v$ is a singleton cluster, let $S(v)$ be the number of positive edges incident to $v$ the algorithm cut when it created this cluster. Note, that some edges incident to $v$ might be cut before $v$ was put in a singleton cluster. We do not count these edges in $S(v)$. 

Every edge which is in disagreement with the clustering is (1) a positive edge cut at a pivot step of the algorithm; (2) a negative edge joined at a pivot step of the algorithm; or (3) a positive edge cut at a singleton step of the algorithm. Hence, the cost of the algorithm equals:
$$ALG= P^+ + P^- + \sum_{v \in V}S(v).$$

\begin{lemma}
We have $\E[P^+ + P^-]\leq 3OPT$.
\end{lemma}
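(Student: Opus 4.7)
My plan is to port the classical Ailon-Charikar-Newman $3$-approximation analysis of the PIVOT algorithm to Algorithm~2, restricted to pivot steps. The first task is to classify which settled edges can contribute to the cost $P^+ + P^-$. When $w_t$ becomes a pivot at step $t$, the cluster $C_t = \{w_t\}\cup (N(w_t)\cap U_t)$ is created, where $N(w_t)$ is the set of positive neighbours of $w_t$. By direct case analysis, the disagreeing edges settled at this step are exactly (i) positive edges $(u,v)$ with $u\in C_t\setminus\{w_t\}$ and $v\in U_t\setminus C_t$ (positive edges cut between the new cluster and the rest of $U_t$), and (ii) negative edges $(u,v)$ inside $C_t\setminus\{w_t\}$ (negative edges joined within the cluster). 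In either case, $\{w_t,u,v\}$ is a \emph{bad triangle}, i.e., a triple of vertices with exactly two positive edges and one negative edge.

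Next, I would charge each such disagreement to its bad triangle and bound the total via symmetry. For a bad triangle $T = \{a,b,c\}$, let $\tau_T$ denote the first step at which some vertex of $T$ is chosen as $w_{\tau_T}$. Triangle $T$ can contribute to $P^+ + P^-$ only if at step $\tau_T$ the chosen vertex is a pivot (i.e., still in $U_{\tau_T}$) and the other two vertices of $T$ are also in $U_{\tau_T}$. Because the ordering $\pi$ is uniformly random, conditioned on which vertices of $T$ remain unclustered at step $\tau_T$, the identity of $w_{\tau_T}$ is uniformly distributed among those remaining, and each of the three possible pivot choices produces exactly one disagreeing edge of $T$. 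Crucially, the singleton steps of Algorithm~2 only shrink $U_t$ monotonically (they never return a vertex to $U_t$), so singleton steps can only suppress the event ``$T$ contributes''; the probabilistic charging thus transfers without loss from the analysis of pure PIVOT.

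To conclude, I would lower-bound OPT via the standard LP relaxation of correlation clustering with triangle-inequality constraints $x_{uv}\leq x_{uw} + x_{wv}$. For a bad triangle $T=\{a,b,c\}$ with negative edge $(b,c)$ and positive edges $(a,b),(a,c)$, the triangle inequality forces $x_{ab} + x_{ac} + (1-x_{bc})\geq 1$, so OPT dominates an appropriately weighted fractional triangle packing. Combining this with the factor-of-three loss from the uniform pivot choice yields $\E[P^+ + P^-]\leq 3\,\mathrm{OPT}$. The main technical obstacle will be this final LP-charging step: since bad triangles share edges, one must match the pivot probabilities against carefully chosen dual weights on triangles, which is the heart of the Ailon-Charikar-Newman argument; I would follow their analysis once the preceding combinatorial setup is in place, using the monotonicity observation about $U_t$ to justify that the singleton steps do not interfere.
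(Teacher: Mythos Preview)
Your approach is correct and lands in the same place as the paper, but the paper's proof is considerably more streamlined. The paper simply observes that, conditioned on $w_t\in U_t$, the vertex $w_t$ is uniform in $U_t$ (since $U_t\subseteq V_t$ and $w_t$ is uniform in $V_t$), so each pivot step is \emph{literally} one step of the Ailon--Charikar--Newman \textsc{Pivot} algorithm run on the current unclustered set. It then invokes their per-step inequality $\E[\text{cost settled at step }t]\le 3\,\E[\Delta\mathrm{OPT}_t]$ (cf.\ Lemma~4 in Chawla et al.) as a black box and sums, using $\sum_t \Delta\mathrm{OPT}_t\le \mathrm{OPT}$. No bad-triangle bookkeeping, no separate monotonicity argument about singleton steps, and no LP are needed---the per-step statement is already robust to how $U_t$ was produced.

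Your route unpacks that black box: you rebuild the bad-triangle charging, verify that the symmetry ``first vertex of $T$ is uniform in $T$ given the positions of $T$ in $\pi$'' survives the presence of singleton steps (it does, since $U_{\tau_T}$ is determined by $\pi$ outside $T$ and the set of positions of $T$, independent of the internal order of $T$), and then plan to invoke the Ailon--Charikar--Newman charging at the end. That is fine, and the monotonicity remark is the right sanity check. One comment: your plan to go through the LP relaxation is an unnecessary detour for the factor $3$. The combinatorial $3$-approximation argument compares directly to an integral optimum---each bad triangle has at least one edge disagreeing with OPT, and the disjointness of the events $B_{T,w_T}$ (pivot opposite the chosen OPT-violated edge) over triangles sharing that edge gives $\sum_T \Pr[A_T]\le 3\,\mathrm{OPT}$ without any LP. The LP with triangle inequalities is what yields the $2.5$ factor in the LP-based \textsc{Pivot}, which is a different and more delicate analysis.
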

\begin{proof}
Observe that pivot steps are identical to the steps of the 
\textsc{Pivot} algorithm. That is, given that $w_t$ belongs to $U_t$, the conditional distribution of $w_t$ is uniform in $U_t$ and the cluster created at this step contains $w_t$ and all its positive neighbours. 
Let us fix an optimal solution and denote its cost by $OPT$. In this proof, we also refer to the cost of a set of edges in the fixed optimal solution as the optimal cost of edges in that set. \cite*{ailon2008aggregating} proved that the expected cost of the solution produced by the \textsc{Pivot} algorithm is upper bounded by $3OPT$. Furthermore, they showed that the expected cost of edges settled at step $t$ is upper bounded by $3\E[\Delta OPT_t]$, where $\Delta OPT_t$ is the optimal cost of edges settled by \textsc{Pivot} at step $t$ (see also Lemma 4 in \cite{chawla2015near}).  Therefore, the expected total cost of all edges settled at pivot steps of our algorithm is upper bounded by $3\E[\sum_{t \text{ is pivot step}}\Delta OPT_t]$. The optimal cost of all settled edges at pivot steps is upper bounded by the optimal cost of all edges, which equals $OPT$. Hence, $\E[P^+ + P^-]\leq 3OPT$.
\end{proof}

We now bound $\E[\sum_v S(v)]$. Fix a vertex $v$. Let $N(v)$ be the set of all vertices connected with $v$ by a positive edge (``positive or similar neighbours''), and $D^+(v)=|N(v)|$ be the positive degree of $v$. We now define a new quantity $X_t(v)$. 
If $v$ is not in a singleton cluster at step $t$, then let  $X_t(v)$ be the number of positive edges incident on $v$ cut in the first $t-1$ steps of the algorithm. Otherwise, 
let $X_t(v)$ be the number of positive edges incident on $v$ cut by the algorithm before $v$ was placed in the singleton cluster.
The number of edges incident to $v$ cut by the algorithm equals 
$X_{n+1}(v) + S(v)$. Hence, the total number of cut positive edges equals 
$\dfrac{1}{2}\sum_{v\in V} \big(X_{n+1}(v) + S(v)\big)$. It also equals
$P^+ + \sum_{v\in V} S(v)$. Thus,
$$
\E\Big[\sum_{v\in V} X_{n+1}(v)\Big]
\leq 
\E\big[2P^+ + \sum_{v\in V} S(v)\big]\leq 6OPT + 
\E\Big[\sum_{v\in V} S(v)\Big].
$$
We now show that $\E[S(v)]\leq \nicefrac{1}{k}\;\E[X_{n+1}(v)]$ for all $v$ and, therefore,
$$\E\Big[\sum_{v\in V} S(v)\Big]
\leq \frac{6 OPT}{k-1},
$$
and 
$$\E[ALG]= \E\Big[P^+ + P^- + \sum_{v \in V}S(v)\Big]\leq \Big(3 + \frac{6}{k-1}\Big)\,OPT.$$

To finish the analysis of Algorithm 2, it remains to prove the following lemma.

\begin{lemma}
We have $\E[S(v)]\leq \nicefrac{1}{k}\;\E[X_{n+1}(v)]$.
\end{lemma}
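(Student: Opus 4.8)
The plan is to first dispose of the easy part and set up an equivalent clean inequality, and then to run a step‑by‑step analysis of how the counter $K_t(v)$ grows.

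Since $S(v)=0$ unless $v$ ends up in a singleton cluster, it suffices to control $\E[S(v)\,\one_B]$, where $B$ is the event that Algorithm~2 places $v$ in a singleton cluster, at some step $t$. On $B$ the counter $K_t(v)$ just reached $k$, and this happened because exactly $k$ positive neighbours $y_1,\dots,y_k$ of $v$ were scanned, each at a moment when it was already clustered; since $y_j$ was already clustered when scanned, the edge $(v,y_j)$ had already been cut \emph{before} step $t$, so these $k$ edges are all counted in $X_{n+1}(v)$, giving $X_{n+1}(v)\ge k$ on $B$. Moreover, since $v$ is a singleton, every positive edge at $v$ except the self‑loop is cut, and each such cut is recorded in exactly one of $X_{n+1}(v)$ (if it happened before step $t$) or $S(v)$ (if it happened at step $t$); hence $X_{n+1}(v)+S(v)=D^+(v)-1$ on $B$. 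So the desired inequality is equivalent to $\E[(k+1)S(v)\,\one_B]\le (D^+(v)-1)\Pr[B]$, and also to the statement $\E[k\,S(v)]\le\E[X_{n+1}(v)]$, once one remembers that $X_{n+1}(v)$ also accrues mass on $B^c$.

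The heart of the proof is to reveal the scans $w_1,w_2,\dots$ of Algorithm~2 one at a time and, while $v$ is still unclustered, maintain three quantities: the counter $i=K(v)$; the set $\calR$ of positive neighbours of $v$ that are currently unclustered (all of which are, at that moment, still unscanned); and the set $\calC$ of positive neighbours of $v$ that are clustered but not yet scanned. The transition rules are: a uniformly random unscanned vertex is scanned; scanning $v$ or a vertex of $\calR$ clusters $v$ as a non‑singleton (killing $B$, with the $\calR$‑case cutting some further edges at $v$, i.e.\ only adding to $X_{n+1}(v)$); scanning a vertex of $\calC$ increments $i$ (the corresponding edge at $v$ having been charged to $X_{n+1}(v)$ already), and if this makes $i=k$ then $v$ becomes a singleton with $S(v)=|\calR|$; scanning any other vertex leaves $i$ and the status of $v$ unchanged but can only move some neighbours of $v$ from $\calR$ into $\calC$, each such move cutting one more positive edge at $v$ and thus adding one more unit to $X_{n+1}(v)$. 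I would then prove by backward induction on the number of unscanned vertices that, from any reachable state, $\E\!\left[k\,S(v)-X_{n+1}(v)^{\mathrm{fut}}\mid\text{state}\right]\le \psi\!\left(i,|\calR|,|\calC|\right)$ for a potential $\psi$ of the shape $\psi(i,r,c)=c_0(r)\prod_{l=0}^{k-1-i}\frac{c-l}{1+r+c-l}$, which automatically vanishes once $c\le k-1-i$ (not enough neighbours left in $\calC$ to finish the $k-i$ remaining increments) and satisfies $\psi(0,D^+(v)-1,0)=0$ — exactly the bound we want at the start. The one‑step recursion then reduces, after the irrelevant‑scan terms are absorbed, to one‑dimensional inequalities such as $\psi(i,r,c)\ge\frac{c}{1+r+c}\,\psi(i+1,r,c-1)$ and $\psi(k-1,r,c)\ge\frac{c}{1+r+c}\,kr$; the factor $1/k$ is forced here because $v$ needs $k$ successive samples from $\calC$ before being rescued, so the ``one unit of future $S(v)$ per unit of $\calC$'' is diluted $k$‑fold.

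The step I expect to be the real obstacle is pinning down $c_0(r)$ and verifying the recursion at the boundary $c=k-1-i$: there, an irrelevant scan that moves one neighbour from $\calR$ to $\calC$ turns a $\psi$‑value of $0$ into a positive one (the extra member of $\calC$ lets $v$ reach a singleton) at a cost of exactly one unit of $X_{n+1}(v)$, and one must check that this unit dominates — a delicate inequality among binomial‑type quantities whose exact form dictates whether the constant comes out as $1/k$ (needed) or something weaker. A further nuisance is that a single irrelevant scan can move several neighbours out of $\calR$ at once, so one also wants either to argue that the worst case is one‑at‑a‑time or to verify $-\delta+\psi(i,r-\delta,c+\delta)\le\psi(i,r,c)$ directly. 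A useful sanity check that the argument must be carried out globally over the revelation, rather than conditionally, is that conditioning on the $\pi$‑order of all vertices other than $v$ does \emph{not} give the inequality (one can build small examples where the conditional expectation of $kS(v)$ exceeds that of $X_{n+1}(v)$); only after averaging over that order does the bound hold, which is precisely what the potential is there to capture.
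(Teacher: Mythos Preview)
Your proposal is a plan, not a proof, and the holes you flag are exactly the places where the argument has to be made. You never pin down $c_0(r)$; you do not verify the boundary recursion at $c=k-1-i$; and you do not check the pointwise inequality $-\delta+\psi(i,r-\delta,c+\delta)\le\psi(i,r,c)$ needed to absorb irrelevant scans that pull several neighbours from $\calR$ into $\calC$ at once. These are not cosmetic: the factor $1/k$ has to emerge from precisely these checks, as you say yourself. Moreover, your proposed state $(i,r,c)$ omits the total number of unscanned vertices, so the transition probabilities you write down (such as $c/(1+r+c)$) already presuppose that the irrelevant scans have been ``absorbed'' --- but an irrelevant scan can change $(r,c)$, so the absorption and the recursion are entangled, and you have not shown how to disentangle them.

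The paper avoids all of this with a single explicit submartingale. Set
\[
\Phi_t(v)\;=\;\mathbf{1}(v\notin U_t)\,X_t(v)\;-\;K_t(v)\bigl(D^+(v)-X_t(v)\bigr).
\]
Condition on step $t$ scanning some vertex of $N(v)$: with probability at least $(D^+(v)-X_t(v))/D^+(v)$ the scanned neighbour is still unclustered, so it becomes a pivot, $v$ is clustered, the indicator flips, and $\Phi$ rises by at least $X_t(v)$; with the remaining probability the counter $K_t(v)$ increments and $\Phi$ drops by at most $D^+(v)-X_t(v)$. The two expected contributions cancel, so $\Phi_t(v)$ is a submartingale with $\Phi_1(v)=0$, and $\E[\Phi_{n+1}(v)]\ge 0$ unwinds to $\E[X_{n+1}(v)]\ge\E\!\bigl[K_{n+1}(v)\,(D^+(v)-X_{n+1}(v))\bigr]\ge k\,\E[S(v)]$. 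The point is that this potential depends only on $X_t(v)$ and $K_t(v)$ --- not on your $r$ and $c$ separately, and not on the number of unscanned vertices --- so the ``multi-move'' and ``boundary'' difficulties simply never arise.
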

\begin{proof}  
Define the following potential function:
$$\Phi_t(v) = 
\textbf{1}(v\notin U_t)\cdot X_t(v) - K_t(v)\cdot (D^+(v)-X_t(v)),
$$
where $\textbf{1}(v\notin U_t)$ is the indicator of event $\{v\notin U_t\}$. 
Let $\calF_t$ be the state of the algorithm at the beginning of step $t$. 
We claim that $\Phi_t(u)$ is a submartingale i.e., 
$\E[\Phi_{t+1}(u)\mid \calF_t]
\geq \Phi_t(u)$. 
\begin{claim}
$\Phi_t(v)$ is a submartingale.
\end{claim}
\begin{proof}
Consider step $t$ of the algorithm. If $v$ is already clustered (i.e., $v\notin U_t$), then all edges incident on $v$ are settled and terms $\textbf{1}(v\notin U_t)$, $K_t(v)$, and $X_t(v)$ no longer change. Hence, the potential function $\Phi_t(v)$ does not change as well.

Suppose that $v\in U_t$. If at step $t$, the algorithm picks $w_{t}$ not from the neighbourhood $N(v)$, then $v$ does not get clustered at this step and also $K_t(v)$ does not change. The value of $X_t(v)$ may increase as some neighbours of $v$ may get clustered. However, $X_t(v)$ cannot decrease. Therefore, 
$\Phi_{t+1}(v)\geq \Phi_{t}(v)$, and $\E\big[\Phi_{t+1}(v) - \Phi_{t}(v)\mid 
w_{t}\notin N(v)
;\;
\calF_t\big]
\geq 0$.

Let us now assume that at step $t$, the algorithm chooses $w_{t}$ from the neighbourhood of $v$. Then, $v$ is clustered with conditional probability at least
$$
\Pr\Big(w_{t}\in N(v) \cap U_t\mid w_{t}\in N(v)
; \calF_t\Big) = 
\frac{|N(v)\cap U_t|}{|N(v)\cap V_t|},
$$
because if the pivot is chosen in $N(v) \cap U_t$, then $v$ is put in the cluster of $w_t$. We have $|N(v) \cap U_t| = D^+(v) - X_t(v)$, since every clustered neighbour of $v$ contributes $1$ to the number of cut edges $X_t(v)$, and the total number of $v$'s positive neighbours equals $D^+(v)$. Also, note that $|N(v)\cap V_t|\leq |N(v)|= D^+(v)$. Hence,
$$
\Pr\Big(w_t\in N(v)\cap U_t\mid w_{t}\in N(v)
; \calF_t\Big) \geq 
\frac{D^+(v) - X_t(v)}{D^+(v)}.
$$
With this conditional probability, $\mathbf{1}(v\notin U_t)$ gets equal to $1$ at step $t$, and, as a result, $\Phi_{t}(v)$ increases by $X_{t+1}(v)\geq X_{t}(v)$. 
With the remaining probability, $K_t(v)$ increases by $1$, and, as a result, 
$\Phi_{t}(v)$ decreases by at most
$D^+(v) - X_{t+1}(v)\leq D^+(v)-X_{t}(v)$.
Thus, 
\begin{multline*}
\E\big[\Phi_{t+1}(v) - \Phi_{t}(v)\mid 
w_{t}\in N(v)
;\;
\calF_t\big]
\geq 
\frac{D^+(v) - X_t(v)}{D^+(v)} \cdot X_t(v)
-
\frac{X_t(v)}{D^+(v)} \cdot (D^+(v) - X_t(v)) = 0
.
\end{multline*}
This concludes the proof.
\end{proof}

Since $\Phi_1(v) = 0$ and $\Phi_t(v)$ is a submartingale, we have $\E[\Phi_{n+1}(v)] \geq 0$. 
Therefore,
$$
\E[X_{n+1}(v)]
=
\E[\textbf{1}(v\notin U_{n+1})\cdot X_{n+1}(v)]
\geq  
\E[K_{n+1}(v)\cdot (D^+(v)-X_{n+1}(v))].$$
If $v$ is in a singleton cluster, then $K_{n+1}(v)= k$ and $X_{n+1}(v) + S(v) = D^+(v)$ (because all positive edges incident to $v$ are cut). If $v$ is not in a singleton, then $S(v)=0$. Hence,
$k S(v) \leq K_{n+1}(v)\cdot (D^+(v)-X_{n+1}(v))$. Consequently, 
$\E[X_{n+1}(v)]\geq k \E[S(v)]$.
\end{proof}

\bibliographystyle{abbrvnat}
\bibliography{references}

\begin{thebibliography}{44}
\providecommand{\natexlab}[1]{#1}
\providecommand{\url}[1]{\texttt{#1}}
\expandafter\ifx\csname urlstyle\endcsname\relax
  \providecommand{\doi}[1]{doi: #1}\else
  \providecommand{\doi}{doi: \begingroup \urlstyle{rm}\Url}\fi

\bibitem[Ahmadi et~al.(2019)Ahmadi, Khuller, and Saha]{ahmadi2019min}
S.~Ahmadi, S.~Khuller, and B.~Saha.
\newblock Min-max correlation clustering via multicut.
\newblock In \emph{Integer Programming and Combinatorial Optimization: 20th
  International Conference, IPCO 2019, Ann Arbor, MI, USA, May 22-24, 2019,
  Proceedings}, pages 13--26. Springer, 2019.

\bibitem[Ahmadian and Negahbani(2023)]{ahmadian2023improved}
S.~Ahmadian and M.~Negahbani.
\newblock Improved approximation for fair correlation clustering.
\newblock In \emph{International Conference on Artificial Intelligence and
  Statistics}, pages 9499--9516. PMLR, 2023.

\bibitem[Ahmadian et~al.(2020)Ahmadian, Epasto, Kumar, and
  Mahdian]{ahmadian2020fair}
S.~Ahmadian, A.~Epasto, R.~Kumar, and M.~Mahdian.
\newblock Fair correlation clustering.
\newblock In \emph{International Conference on Artificial Intelligence and
  Statistics}, pages 4195--4205. PMLR, 2020.

\bibitem[Ahn et~al.(2015)Ahn, Cormode, Guha, McGregor, and
  Wirth]{ahn2015correlation}
K.~Ahn, G.~Cormode, S.~Guha, A.~McGregor, and A.~Wirth.
\newblock Correlation clustering in data streams.
\newblock In \emph{International Conference on Machine Learning}, pages
  2237--2246. PMLR, 2015.

\bibitem[Ailon et~al.(2008)Ailon, Charikar, and Newman]{ailon2008aggregating}
N.~Ailon, M.~Charikar, and A.~Newman.
\newblock Aggregating inconsistent information: ranking and clustering.
\newblock \emph{Journal of the ACM (JACM)}, 55\penalty0 (5):\penalty0 1--27,
  2008.

\bibitem[Assadi and Wang(2022)]{assadi2022sublinear}
S.~Assadi and C.~Wang.
\newblock Sublinear time and space algorithms for correlation clustering via
  sparse-dense decompositions.
\newblock In \emph{13th Innovations in Theoretical Computer Science Conference
  (ITCS 2022)}, 2022.

\bibitem[Bansal et~al.(2004)Bansal, Blum, and Chawla]{bansal2004correlation}
N.~Bansal, A.~Blum, and S.~Chawla.
\newblock Correlation clustering.
\newblock \emph{Machine learning}, 56:\penalty0 89--113, 2004.

\bibitem[Behnezhad et~al.(2022)Behnezhad, Charikar, Ma, and
  Tan]{behnezhad2022almost}
S.~Behnezhad, M.~Charikar, W.~Ma, and L.-Y. Tan.
\newblock Almost 3-approximate correlation clustering in constant rounds.
\newblock In \emph{2022 IEEE 63rd Annual Symposium on Foundations of Computer
  Science (FOCS)}, pages 720--731. IEEE, 2022.

\bibitem[Behnezhad et~al.(2023)Behnezhad, Charikar, Ma, and
  Tan]{behnezhad2023single}
S.~Behnezhad, M.~Charikar, W.~Ma, and L.-Y. Tan.
\newblock Single-pass streaming algorithms for correlation clustering.
\newblock In \emph{Proceedings of the 2023 Annual ACM-SIAM Symposium on
  Discrete Algorithms (SODA)}, pages 819--849. SIAM, 2023.

\bibitem[Blelloch et~al.(2012)Blelloch, Fineman, and Shun]{blelloch2012greedy}
G.~E. Blelloch, J.~T. Fineman, and J.~Shun.
\newblock Greedy sequential maximal independent set and matching are parallel
  on average.
\newblock In \emph{Proceedings of the twenty-fourth annual ACM symposium on
  Parallelism in algorithms and architectures}, pages 308--317, 2012.

\bibitem[Cambus et~al.(2021)Cambus, Choo, Miikonen, and
  Uitto]{cambus2021massively}
M.~Cambus, D.~Choo, H.~Miikonen, and J.~Uitto.
\newblock Massively parallel correlation clustering in bounded arboricity
  graphs.
\newblock In \emph{35th International Symposium on Distributed Computing (DISC
  2021)}. Schloss Dagstuhl-Leibniz-Zentrum f{\"u}r Informatik, 2021.

\bibitem[Cambus et~al.(2022)Cambus, Kuhn, Lindy, Pai, and
  Uitto]{cambus2022parallel}
M.~Cambus, F.~Kuhn, E.~Lindy, S.~Pai, and J.~Uitto.
\newblock A parallel algorithm for $(3+\varepsilon)$-approximate correlation
  clustering.
\newblock \emph{arXiv preprint arXiv:2205.07593}, 2022.

\bibitem[Charikar et~al.(2005)Charikar, Guruswami, and
  Wirth]{charikar2005clustering}
M.~Charikar, V.~Guruswami, and A.~Wirth.
\newblock Clustering with qualitative information.
\newblock \emph{Journal of Computer and System Sciences}, 71\penalty0
  (3):\penalty0 360--383, 2005.

\bibitem[Charikar et~al.(2017)Charikar, Gupta, and Schwartz]{charikar2017local}
M.~Charikar, N.~Gupta, and R.~Schwartz.
\newblock Local guarantees in graph cuts and clustering.
\newblock In \emph{Integer Programming and Combinatorial Optimization: 19th
  International Conference, IPCO 2017, Waterloo, ON, Canada, June 26-28, 2017,
  Proceedings}, pages 136--147. Springer, 2017.

\bibitem[Chawla et~al.(2015)Chawla, Makarychev, Schramm, and
  Yaroslavtsev]{chawla2015near}
S.~Chawla, K.~Makarychev, T.~Schramm, and G.~Yaroslavtsev.
\newblock Near optimal lp rounding algorithm for correlationclustering on
  complete and complete k-partite graphs.
\newblock In \emph{Proceedings of the forty-seventh annual ACM symposium on
  Theory of computing}, pages 219--228, 2015.

\bibitem[Chierichetti et~al.(2014)Chierichetti, Dalvi, and
  Kumar]{chierichetti2014correlation}
F.~Chierichetti, N.~Dalvi, and R.~Kumar.
\newblock Correlation clustering in mapreduce.
\newblock In \emph{Proceedings of the 20th ACM SIGKDD international conference
  on Knowledge discovery and data mining}, pages 641--650, 2014.

\bibitem[Chierichetti et~al.(2017)Chierichetti, Kumar, Lattanzi, and
  Vassilvitskii]{chierichetti2017fair}
F.~Chierichetti, R.~Kumar, S.~Lattanzi, and S.~Vassilvitskii.
\newblock Fair clustering through fairlets.
\newblock \emph{Advances in neural information processing systems}, 30, 2017.

\bibitem[Cohen-Addad et~al.(2021)Cohen-Addad, Lattanzi, Mitrovi{\'c},
  Norouzi-Fard, Parotsidis, and Tarnawski]{cohen2021correlation}
V.~Cohen-Addad, S.~Lattanzi, S.~Mitrovi{\'c}, A.~Norouzi-Fard, N.~Parotsidis,
  and J.~Tarnawski.
\newblock Correlation clustering in constant many parallel rounds.
\newblock In \emph{International Conference on Machine Learning}, pages
  2069--2078. PMLR, 2021.

\bibitem[Cohen-Addad et~al.(2022{\natexlab{a}})Cohen-Addad, Lattanzi, Maggiori,
  and Parotsidis]{cohen2022online}
V.~Cohen-Addad, S.~Lattanzi, A.~Maggiori, and N.~Parotsidis.
\newblock Online and consistent correlation clustering.
\newblock In \emph{International Conference on Machine Learning}, pages
  4157--4179. PMLR, 2022{\natexlab{a}}.

\bibitem[Cohen-Addad et~al.(2022{\natexlab{b}})Cohen-Addad, Lee, and
  Newman]{cohen2022correlation}
V.~Cohen-Addad, E.~Lee, and A.~Newman.
\newblock Correlation clustering with sherali-adams.
\newblock In \emph{2022 IEEE 63rd Annual Symposium on Foundations of Computer
  Science (FOCS)}, pages 651--661. IEEE, 2022{\natexlab{b}}.

\bibitem[Davies et~al.(2023)Davies, Moseley, and Newman]{davies2023fast}
S.~Davies, B.~Moseley, and H.~Newman.
\newblock Fast combinatorial algorithms for min max correlation clustering.
\newblock In \emph{40th International Conference on Machine Learning
  (ICML'23)}, 2023.

\bibitem[Demaine et~al.(2006)Demaine, Emanuel, Fiat, and
  Immorlica]{demaine2006correlation}
E.~D. Demaine, D.~Emanuel, A.~Fiat, and N.~Immorlica.
\newblock Correlation clustering in general weighted graphs.
\newblock \emph{Theoretical Computer Science}, 361\penalty0 (2-3):\penalty0
  172--187, 2006.

\bibitem[Elsner and Schudy(2009)]{elsner2009bounding}
M.~Elsner and W.~Schudy.
\newblock Bounding and comparing methods for correlation clustering beyond ilp.
\newblock In \emph{Proceedings of the Workshop on Integer Linear Programming
  for Natural Language Processing}, pages 19--27, 2009.

\bibitem[Feigenbaum et~al.(2005)Feigenbaum, Kannan, McGregor, Suri, and
  Zhang]{feigenbaum2005graph}
J.~Feigenbaum, S.~Kannan, A.~McGregor, S.~Suri, and J.~Zhang.
\newblock On graph problems in a semi-streaming model.
\newblock \emph{Theoretical Computer Science}, 348\penalty0 (2-3):\penalty0
  207--216, 2005.

\bibitem[Fischer and Noever(2019)]{fischer2019tight}
M.~Fischer and A.~Noever.
\newblock Tight analysis of parallel randomized greedy mis.
\newblock \emph{ACM Transactions on Algorithms (TALG)}, 16\penalty0
  (1):\penalty0 1--13, 2019.

\bibitem[Friggstad and Mousavi(2021)]{friggstad2021fair}
Z.~Friggstad and R.~Mousavi.
\newblock Fair correlation clustering with global and local guarantees.
\newblock In \emph{Algorithms and Data Structures: 17th International
  Symposium, WADS 2021, Virtual Event, August 9--11, 2021, Proceedings 17},
  pages 414--427. Springer, 2021.

\bibitem[Ghaffari et~al.(2018)Ghaffari, Gouleakis, Konrad, Mitrovi{\'c}, and
  Rubinfeld]{ghaffari2018improved}
M.~Ghaffari, T.~Gouleakis, C.~Konrad, S.~Mitrovi{\'c}, and R.~Rubinfeld.
\newblock Improved massively parallel computation algorithms for mis, matching,
  and vertex cover.
\newblock In \emph{Proceedings of the 2018 ACM Symposium on Principles of
  Distributed Computing}, pages 129--138, 2018.

\bibitem[Jafarov et~al.(2020)Jafarov, Kalhan, Makarychev, and
  Makarychev]{jafarov2020correlation}
J.~Jafarov, S.~Kalhan, K.~Makarychev, and Y.~Makarychev.
\newblock Correlation clustering with asymmetric classification errors.
\newblock In \emph{International Conference on Machine Learning}, pages
  4641--4650. PMLR, 2020.

\bibitem[Jafarov et~al.(2021)Jafarov, Kalhan, Makarychev, and
  Makarychev]{jafarov2021local}
J.~Jafarov, S.~Kalhan, K.~Makarychev, and Y.~Makarychev.
\newblock Local correlation clustering with asymmetric classification errors.
\newblock In \emph{International Conference on Machine Learning}, pages
  4677--4686. PMLR, 2021.

\bibitem[Kalhan et~al.(2019)Kalhan, Makarychev, and
  Zhou]{kalhan2019correlation}
S.~Kalhan, K.~Makarychev, and T.~Zhou.
\newblock Correlation clustering with local objectives.
\newblock \emph{Advances in Neural Information Processing Systems}, 32, 2019.

\bibitem[Kim et~al.(2014)Kim, Yoo, Nowozin, and Kohli]{kim2014image}
S.~Kim, C.~D. Yoo, S.~Nowozin, and P.~Kohli.
\newblock Image segmentation usinghigher-order correlation clustering.
\newblock \emph{IEEE transactions on pattern analysis and machine
  intelligence}, 36\penalty0 (9):\penalty0 1761--1774, 2014.

\bibitem[Lattanzi et~al.(2021)Lattanzi, Moseley, Vassilvitskii, Wang, and
  Zhou]{lattanzi2021robust}
S.~Lattanzi, B.~Moseley, S.~Vassilvitskii, Y.~Wang, and R.~Zhou.
\newblock Robust online correlation clustering.
\newblock \emph{Advances in Neural Information Processing Systems},
  34:\penalty0 4688--4698, 2021.

\bibitem[Makarychev et~al.(2015)Makarychev, Makarychev, and
  Vijayaraghavan]{makarychev2015correlation}
K.~Makarychev, Y.~Makarychev, and A.~Vijayaraghavan.
\newblock Correlation clustering with noisy partial information.
\newblock In \emph{Conference on Learning Theory}, pages 1321--1342. PMLR,
  2015.

\bibitem[Mathieu and Schudy(2010)]{mathieu2010correlation}
C.~Mathieu and W.~Schudy.
\newblock Correlation clustering with noisy input.
\newblock In \emph{Proceedings of the twenty-first annual ACM-SIAM symposium on
  Discrete Algorithms}, pages 712--728. SIAM, 2010.

\bibitem[Mathieu et~al.(2010)Mathieu, Sankur, and Schudy]{mathieu2010online}
C.~Mathieu, O.~Sankur, and W.~Schudy.
\newblock {Online Correlation Clustering}.
\newblock In J.-Y. Marion and T.~Schwentick, editors, \emph{27th International
  Symposium on Theoretical Aspects of Computer Science}, volume~5 of
  \emph{Leibniz International Proceedings in Informatics (LIPIcs)}, pages
  573--584, Dagstuhl, Germany, 2010. Schloss Dagstuhl--Leibniz-Zentrum fuer
  Informatik.
\newblock ISBN 978-3-939897-16-3.
\newblock \doi{10.4230/LIPIcs.STACS.2010.2486}.
\newblock URL \url{http://drops.dagstuhl.de/opus/volltexte/2010/2486}.

\bibitem[Muthukrishnan(2005)]{muthukrishnan2005data}
S.~Muthukrishnan.
\newblock Data streams: Algorithms and applications.
\newblock \emph{Foundations and Trends in Theoretical Computer Science},
  1\penalty0 (2):\penalty0 117--236, 2005.

\bibitem[Pan et~al.(2015)Pan, Papailiopoulos, Oymak, Recht, Ramchandran, and
  Jordan]{pan2015parallel}
X.~Pan, D.~Papailiopoulos, S.~Oymak, B.~Recht, K.~Ramchandran, and M.~I.
  Jordan.
\newblock Parallel correlation clustering on big graphs.
\newblock \emph{Advances in Neural Information Processing Systems}, 28, 2015.

\bibitem[Puleo and Milenkovic(2016)]{puleo2016correlation}
G.~Puleo and O.~Milenkovic.
\newblock Correlation clustering and biclustering with locally bounded errors.
\newblock In \emph{International Conference on Machine Learning}, pages
  869--877. PMLR, 2016.

\bibitem[Schwartz and Zats(2022)]{schwartz2022fair}
R.~Schwartz and R.~Zats.
\newblock Fair correlation clustering in general graphs.
\newblock In \emph{Approximation, Randomization, and Combinatorial
  Optimization. Algorithms and Techniques (APPROX/RANDOM 2022)}. Schloss
  Dagstuhl-Leibniz-Zentrum f{\"u}r Informatik, 2022.

\bibitem[Shi et~al.(2021)Shi, Dhulipala, Eisenstat, {\L}{\u{a}}cki, and
  Mirrokni]{shiscalable}
J.~Shi, L.~Dhulipala, D.~Eisenstat, J.~{\L}{\u{a}}cki, and V.~Mirrokni.
\newblock Scalable community detection via parallel correlation clustering.
\newblock \emph{Proceedings of the VLDB Endowment}, 14\penalty0 (11):\penalty0
  2305--2313, 2021.

\bibitem[Swamy(2004)]{swamy2004correlation}
C.~Swamy.
\newblock Correlation clustering: maximizing agreements via semidefinite
  programming.
\newblock In \emph{SODA}, volume~4, pages 526--527. Citeseer, 2004.

\bibitem[Veldt(2022)]{veldt2022correlation}
N.~Veldt.
\newblock Correlation clustering via strong triadic closure labeling: Fast
  approximation algorithms and practical lower bounds.
\newblock In \emph{International Conference on Machine Learning}, pages
  22060--22083. PMLR, 2022.

\bibitem[Veldt et~al.(2018)Veldt, Gleich, and Wirth]{veldt2018correlation}
N.~Veldt, D.~F. Gleich, and A.~Wirth.
\newblock A correlation clustering framework for community detection.
\newblock In \emph{Proceedings of the 2018 World Wide Web Conference}, pages
  439--448, 2018.

\bibitem[Yaroslavtsev and Vadapalli(2018)]{yaroslavtsev2018massively}
G.~Yaroslavtsev and A.~Vadapalli.
\newblock Massively parallel algorithms and hardness for single-linkage
  clustering under $l_p$ distances.
\newblock In \emph{35th International Conference on Machine Learning
  (ICML'18)}, 2018.

\end{thebibliography}
\end{document}